\newcommand{\citet}[1]{\cite{#1}}
\newcommand{\citep}[1]{\cite{#1}}
\newcommand{\mgft}{market-gain-from-trade}
\newcommand{\tgft}{total-gain-from-trade}
\begin{document}

\title{SBBA: a Strongly-Budget-Balanced Double-Auction Mechanism}
\titlerunning{Strongly-Budget-Balanced Double-Auction}  

\author{
Erel Segal-Halevi \and 
Avinatan Hassidim \and
Yonatan Aumann
}
%
%
%
\institute{
Bar-Ilan University, Ramat-Gan 5290002, Israel,\\
\email{\{erelsgl,avinatanh,yaumann\}@gmail.com}
}

\maketitle              

\begin{abstract}
In a seminal paper, McAfee (1992) presented the first dominant strategy truthful mechanism for double auction.
His mechanism attains nearly optimal gain-from-trade when the market is sufficiently large. 
However, his mechanism may leave money on the table, since the price paid by the buyers may be higher than the price paid to the sellers. This money is included in the gain-from-trade and in some cases it accounts for almost all the gain-from-trade, leaving almost no gain-from-trade to the traders. 
We present SBBA: a variant of McAfee's mechanism which is strongly budget-balanced. There is a single price, all money is exchanged between buyers and sellers and no money is left on the table. This means that all gain-from-trade is enjoyed by the traders. 
We generalize this variant to spatially-distributed markets with transit costs.
\keywords{mechanism design $\cdot$ double auction $\cdot$ budget balance $\cdot$ social welfare $\cdot$ gain from trade $\cdot$ spatially distributed market}
\end{abstract}

\section{Introduction}
In the simplest {\em double auction} a single seller has a single item. The seller values the item for $s$, which is private information to the seller. A single buyer values the item for $b$, which is private to the buyer. If $b > s$, then trade can increase the utility for both traders; there is a potential \emph{gain-from-trade} of $b-s$. However, there is no truthful, individually rational, budget-balanced mechanism that will perform the trade if-and-only-if it is beneficial to both traders. The reason is that it is impossible to determine a price truthfully. This is easy to see for a deterministic mechanism. If the mechanism chooses a price $p < b$, the seller is incentivized to bid $(p + b)/2$ to force the price up; similarly, if the mechanism chooses a price $p>s$, the buyer is incentivized to force the price down. The impossibility holds even when the valuations are drawn from a known prior distribution and even when the mechanism is allowed to randomize; see the classic papers of \citet{Myerson1983Efficient} and \citet{Vickrey1961Counter}.

\subsection{McAfee's Trade-Reduction Mechanism}
McAfee \citep{McAfee1992Dominant} showed how to circumvent this impossibility result when there are many sellers, seller $i$ having  private valuation $s_i$, and many buyers, buyer $i$ having private valuation $b_i$. In McAfee's double auction mechanism, each trader is asked to give his valuation. The sellers are sorted in an ascending order according to their valuations $s_1 \le s_2 \le \ldots \le s_n$, and the buyers are sorted in a descending order $b_1 \ge b_2 \ge \ldots \ge b_n$. Let $k$ be the largest index such that $s_k \le b_k$. The optimal gain-from-trade is attained by picking any price $p\in [s_k , b_k]$ and performing $k$ deals in that price. But this scheme is not truthful. McAfee attains truthfulness by considering the following two cases: 

(a) If there are at least $k+1$ buyers and $k+1$ sellers and the price $p_{k+1} := (b_{k+1} + s_{k+1})/2$ is in the range $[s_k , b_k]$, then $p_{k+1}$ is set as the market price, allowing all $k$ efficient deals to execute in that price.

(b) Otherwise, two prices are used, and $k-1$ deals are done: all sellers with values $s_1, \ldots, s_{k-1}$ sell their item for $s_{k}$, and all buyers with values $b_1, \ldots b_{k-1}$ buy an item for $b_k$. The mechanism performs a \emph{trade reduction} by canceling a single deal, the deal between $b_k$ and $s_k$, which is the least efficient of the $k$ efficient deals. Hence, its gain-from-trade is $1 - 1/k$ of the maximum.

Crucially, the gain-from-trade approximated by McAfee's mechanism is the \textbf{\tgft{}} - the gain-from-trade including the money left on the table due to the difference between the buyers' price and the sellers' price. Moreover, this money might include almost all the gain, so that the \textbf{\mgft{}} - the gain enjoyed by the traders - might be near zero.
\begin{example}\label{exm:intro}
There are $k$ buyers and $k$ sellers with the following valuations
\begin{itemize}
\item $s_{i}=0$ and $b_{i}=B$ \hspace{1.7cm} for all $i\in\{1,\dots,k-1\}$.
\item $s_{k}=\varepsilon$ and $b_{k}=B-\varepsilon$, \hspace{1cm} where $0<\varepsilon\ll B$.
\end{itemize}
The optimal \mgft{} occurs when all sellers sell and all buyers buy, and it is: $k\cdot B-2\varepsilon$. 

Since there are no $k+1$-th buyer and seller, McAfee's mechanism sets the buy price at $B-\varepsilon$ and the sell price at $\varepsilon$. The trade includes $k-1$ buyers and $k-1$ sellers. The gain-from-trade in each deal is $B$, so the \tgft{} is $(k-1)\cdot B$, which is a very good approximation to the optimum for large $k$.

However, the net gain of each trader is $\varepsilon$, so the \mgft{} is only $(k-1)\cdot 2\varepsilon$; when $\varepsilon\to0$, the \mgft{} becomes arbitrarily small, and most gain-from-trade $(k-1)\cdot (B-2\varepsilon)$ remains on the table.
\qed
\end{example}
Money on the table can be desirable in some cases. E.g, the government may want to arrange a double-auction between commercial firms and collect the revenue. However, in other cases it may be considered unfair and drive traders away. For example, if traders in a stock-exchange notice that most gain-from-trade is taken by the operator, they may decide to switch to another operator.

The double-auction literature, e.g. \citep{Blumrosen2014Reallocation}, differentiates between mechanisms that are \emph{weakly budget-balanced}, i.e, the auctioneer does not lose but may gain money, and \emph{strongly budget-balanced}, i.e, the auctioneer does not lose nor gain any money. McAfee's mechanism is weakly budget-balanced. 

\subsection{Our Mechanism}
In this paper we introduce SBBA - a Strongly-Budget-Balanced double-Auction mechanism. SBBA attains strong budget-balance by setting a \emph{single} trade price for all traders, in all cases. This may lead to excess supply; to handle the excess supply, a lottery is done between the sellers. At most one seller, selected at random, is excluded from trade. Hence, the expected \tgft{} of SBBA is the same as McAfee's - $1-1/k$.

A disadvantage of SBBA is that its approximation holds only in expectation (taken over the randomization of the mechanism), while McAfee's approximation holds in the worst case. An advantage of SBBA is that it is strongly budget-balanced, so the \mgft{} equals the \tgft{} - all gain-from-trade is enjoyed by the traders. Besides these differences, SBBA has all the desirable properties of McAfee's mechanism:
\begin{itemize}
\item It is \textbf{ex-post individually-rational} - a trader never loses any value from participating in the market (a buyer is never forced to buy an item for more than its declared value; a seller is never forced to sell an item for less than its declared value; a trader who does not participate in the trade pays nothing).
\item It is \textbf{ex-post dominant-strategy truthful}: for every trader, every vector of declarations by the other traders and every randomization, the trader's net value is always maximized by reporting his true value.
\item It is \textbf{prior-free} - it does not assume or require any knowledge on the distribution of the traders' valuations. In other words, its approximation ratio is valid even for adversarial (worst-case) valuations.
\end{itemize}
Below we survey some related literature (Section \ref{sec:related}). Then, we present the SBBA mechanism in the most basic double auction setting - a single market with single-unit buyers and single-unit sellers (Section \ref{sec:sbba}). The idea of SBBA can be used in much more complex settings. To demonstrate its generality, we show how to use it in a \emph{spatially-distributed market} - a collection of markets in different locations, with positive transit costs between markets (Section \ref{sec:sbba-sdm}).

\section{Related Work}
\label{sec:related}
\subsection{\textbf{Double auctions}}
VCG (Vickrey-Clarke-Grove) is a well-known mechanism that can be used in various settings, including double auction. It is truthful and attains the maximum gain-from-trade. Its main drawback is that it has budget deficit, which means that the auctioneer has to subsidize the market.

McAfee's Trade-Reduction mechanism was extended and generalized in many ways. Some of the extensions are surveyed below.

Babaioff et al extend McAfee's mechanism to handle spatially-distributed markets with transit costs \citep{Babaioff2009Mechanisms} and supply chains \citep{Babaioff2004Concurrent,Babaioff2005Incentivecompatible,Babaioff2006Incentive}, providing similar welfare guarantees. One variant, \textbf{Probabilistic Reduction} \citep{Babaioff2004Concurrent}, achieves \emph{ex-ante} budget balance by randomly selecting between the Vickrey-Clarke-Grove (VCG) mechanism and the Trade-Reduction mechanism. The probability is selected such that the deficit of the VCG exactly balances (in expectation) the surplus of McAfee. However, the probability depends on the distribution of the agents' valuations so the mechanism is not prior-free.

Lately, there has been a surge of interest in a more complicated market, namely double spectrum auctions, in which an auction is used to transfer spectrum from incumbent companies (e.g. TV stations) to modern companies (e.g. cellular operators).
\citet{Yao2011Efficient} adapted the Trade-Reduction mechanism to a double-spectrum-auction by creating groups of non-interfering buyers that can buy the same channel. \citet{Wang2010TODA} created on online variant of Trade-Reduction to handle the case in which new buyers arrive over time. \citet{Wang2011District}
adapted Trade-Reduction to enable local markets, in which only some buyer-seller combinations are feasible. \citet{Feng2012TAHES} adapted
it to enable heterogeneous spectra. All these mechanisms are only weakly budget-balanced.

Some recent papers extend McAfee's mechanism to settings with more than one item per trader. These are the \textbf{TAHES} mechanism of \citet{Feng2012TAHES}, which is multi-type
single-unit; the \textbf{Secondary Market} mechanism of \citet{Xu2010Secondary}, which is single-type multi-unit; and the \textbf{Combinatorial Reallocation} mechanism of  \citet{Blumrosen2014Reallocation}, which is multi-type multi-unit.  Our mechanism is single-type single-unit; we leave  to future work its extension to multi-type and  multi-unit settings.

A different approach to double auctions is \emph{random-sampling}. It was introduced by  \citep{Baliga2003Market} under the assumptions that the traders' valuations are random variables drawn from an unknown bounded-support distribution and there is a single item-type.
We recently extended it to a prior-free setting with multiple item-types \citep{SegalHalevi2016RandomSampling}. The idea is to divide the traders randomly to two half-markets, calculate an optimal price in one half and apply it to the other half, and vice versa. Since there is a single price in each market, the mechanism is strongly-budget-balanced. However, our analysis (for the single-type case) shows that the gain-from-trade is $1-O(\sqrt{\ln{k}/k})$. While this still approaches 1 when the market is sufficiently large, the convergence rate is much slower than the $1-1/k$ guarantee of SBBA.

Recently, \citet{ColiniBaldeschi2016Approximately} presented a \textbf{two-sided sequential posted price mechanism (2SPM)}. Its main objective is to handle matroid constraints on the sets of buyers that can be served simultaneously. Like our mechanism, it is strongly budget-balanced. However, its approximation ratio is multiplicative - the ratio is between 4 to 16, depending on the setting. In particular, the approximation ratio does not approach 1 when the market is large. 

The following table compares our work to some typical single-type single-unit double-auction mechanisms. In this table, an asterisk means "in expectation". TGFT means \tgft{} and MGFT means \mgft{}; they are identical for strongly-budget-balanced mechanisms.
\begin{center}
\begin{tabular}{|c|c|c|c|c|}
	\hline
	\textbf{Mechanism} & \textbf{Prior-free} & \textbf{Budget} & \textbf{TGFT} & \textbf{MGFT} \\
	\hline
	\hline
	VCG & Yes & Deficit & 1 & 1 \\
	\hline
	Trade Reduction (McAfee 1992)\cite{McAfee1992Dominant} & Yes & Surplus & $1-1/k$ & 0 \\
	\hline
	Random Sampling (Baliga 2003)\cite{Baliga2003Market}\cite{SegalHalevi2016RandomSampling} & Yes & Balance &  \multicolumn{2}{c|}{$1 - O(\sqrt{\ln{k}/k})$} \\
	\hline
	Probabilistic Reduction (Babaioff 2009)\cite{Babaioff2009Mechanisms} & No & Balance{*} &  \multicolumn{2}{c|}{$1-1/k${*}}  \\
	\hline
	2SPM (Colini 2016)\cite{ColiniBaldeschi2016Approximately} & No & Balance &  \multicolumn{2}{c|}{1/4 to 1/16} \\
	\hline
	\hline
	SBBA (this paper) & Yes & Balance &  \multicolumn{2}{c|}{$1-1/k${*}} \\
	\hline
\end{tabular}
\end{center}

\subsection{\textbf{Redistribution mechanisms}}
The problem illustrated by Example \ref{exm:intro}, where the money left on the table eats most of the welfare and leaves little welfare to the agents, happens in other domains besides double auctions. Several authors have suggested a two-step solution: in the first step, the original mechanism is executed and the budget-surplus is collected. In the second step, some of the surplus is re-distributed among the agents. This second step is called a \emph{redistribution mechanism} and it should be carefully designed in order not to harm the truthfulness of the original mechanism. While it is not possible to redistribute the entire surplus in a truthful way, there are some truthful mechanisms that redistribute a large fraction of the surplus
\cite{Cavallo2006Optimal,Guo2008Optimalinexpectation,Apt2008Welfare}. 
 We take a different approach: we modify the original mechanism such that there is no budget-surplus at all, so no redistribution is needed and all social welfare remains with the agents.

\section{The SBBA Mechanism}
\label{sec:sbba}
Order the buyers and sellers as in McAfee's mechanism. In case of ties, impose an arbitrary order, e.g. lexicographic order of name.

Let $k$ be the largest integer for which $s_k\leq b_k$ (or 0
if already $s_1>b_1$). Call the first $k$ sellers, the ``cheap
sellers'', and the first $k$ buyers, the ``expensive buyers''.
As a convenience, if $k=n$, set $s_{k+1}=\infty$ and $b_{k+1}=0$. 
Note that with this notation, we have that $s_i\leq b_i$ for
$i\le k$ and $s_{i}>b_{i}$ for $i>k$. The price is:
\begin{align*}
p := \min(s_{k+1},b_k) \enspace .
\end{align*}
There are two cases. We illustrate them below by plotting buyers' valuations as balls and sellers' valuations as squares (in all illustrations, $k=3$).

\newcommand{\buyertitle}[1]{\rput(-1,#1){\color{blue}B:}}
\newcommand{\buyer}[2]{\rput(#1,#2){\pscircle[fillstyle=solid,fillcolor=blue,linecolor=blue](0,0){0.2}}} 

\newcommand{\sellertitle}[1]{\rput(-1,#1){\color{green}S:}}
\newcommand{\seller}[2]{\rput(#1,#2){\rput(-0.2,-0.2){\psframe[fillstyle=solid,fillcolor=green,linecolor=green](0.4,0.4)}}}

\newcommand{\priceline}[1]{\psline[linestyle=solid,linecolor=black!50](0,#1)(10,#1)}

\psset{unit=3mm}
\newcommand{\buyerx}[2]{\buyer{#1}{#2}\priceline{#2}}
\newcommand{\sellerx}[2]{\seller{#1}{#2}\priceline{#2}}

\psset{unit=3mm}

\noindent \textbf{Case 1}: $\text{s}_{k+1}\leq b_{k}$ (note that
by definition of $k$: $s_{k+1}>b_{k+1}$):

\begin{center}
\begin{pspicture}(10,10)
\buyertitle{8}
\buyer{1}{8}
\buyer{2}{7}
\buyer{3}{6}
\buyer{4}{4}
\buyer{5}{3}
\buyer{6}{2}
\sellertitle{1}
\seller{1}{1}
\seller{2}{2}
\seller{3}{3}
\sellerx{4}{5}
\seller{5}{6}
\seller{6}{7}
\pspolygon[fillstyle=solid,fillcolor=yellow,linestyle=none](1,7.5)(2,6.5)(3,5.5)(3,3.5)(2,2.5)(1,1.5)
\end{pspicture}
\hskip 2cm
\begin{pspicture}(10,10)
\buyertitle{8}
\buyer{1}{8}
\buyer{2}{7}
\buyer{3}{6}
\buyer{4}{2}
\buyer{5}{1}
\buyer{6}{0}
\sellertitle{1}
\seller{1}{1}
\seller{2}{2}
\seller{3}{3}
\sellerx{4}{5}
\seller{5}{6}
\seller{6}{7}
\pspolygon[fillstyle=solid,fillcolor=yellow,linestyle=none](1,7.5)(2,6.5)(3,5.5)(3,3.5)(2,2.5)(1,1.5)
\end{pspicture}
\end{center}
The price is $p = s_{k+1}$. All $k$ expensive buyers and $k$ cheap sellers trade in $p$. 

\noindent \textbf{Case }2: $\text{s}_{k+1}>b_{k}$ :
\begin{center}
\begin{pspicture}(10,10)
\buyertitle{8}
\buyer{1}{8}
\buyer{2}{7}
\buyerx{3}{6}
\buyer{4}{4}
\buyer{5}{3}
\buyer{6}{2}
\sellertitle{1}
\seller{1}{1}
\seller{2}{2}
\seller{3}{3}
\seller{4}{7}
\seller{5}{8}
\seller{6}{9}
\pspolygon[fillstyle=solid,fillcolor=yellow,linestyle=none](1,7.5)(2,6.5)(3,3.5)(2,2.5)(1,1.5)
\end{pspicture}
\hskip 2cm
\begin{pspicture}(10,10)
\buyertitle{8}
\buyer{1}{8}
\buyer{2}{7}
\buyerx{3}{6}
\buyer{4}{2}
\buyer{5}{1}
\buyer{6}{0}
\sellertitle{1}
\seller{1}{1}
\seller{2}{2}
\seller{3}{3}
\seller{4}{7}
\seller{5}{8}
\seller{6}{9}
\pspolygon[fillstyle=solid,fillcolor=yellow,linestyle=none](1,7.5)(2,6.5)(3,3.5)(2,2.5)(1,1.5)
\end{pspicture}
\end{center}
The price is $p = b_{k}$. From the group of $k$ cheap sellers, select $k-1$ at random and let them trade with the $k-1$ expensive buyers (excluding $b_k$). 

\begin{theorem}
\label{thm:sbba}
The SBBA mechanism is prior-free (PF), individually-rational (IR), strongly-budget-balanced (SBB) and dominant-strategy truthful. Its expected \mgft{} is at least $1-1/k$ of the optimum.
\end{theorem}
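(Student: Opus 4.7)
The plan is to verify the four properties in order: prior-freeness, individual rationality, and strong budget balance are quick direct checks; truthfulness is the main technical point; and the approximation ratio reduces to a short algebraic computation in each of the two cases.

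\textbf{The quick checks.} Prior-freeness is immediate because the description of SBBA refers only to the reported bids. For individual rationality, I would check both cases: in Case 1 the price $p = s_{k+1}$ satisfies $s_i \le s_k \le s_{k+1} = p$ for each trading seller $i \le k$, and $b_i \ge b_k \ge s_{k+1} = p$ for each trading buyer (the last inequality is exactly the Case 1 hypothesis); in Case 2 the price $p = b_k$ satisfies $s_i \le s_k \le b_k = p$ for each trading seller (by definition of $k$), and $b_i \ge b_{k-1} \ge b_k = p$ for each trading buyer. Non-participants neither pay nor receive, so their utility is $0$. Strong budget balance follows directly from the construction: a single price $p$ is used and the number of trading buyers equals the number of trading sellers (either $k$ and $k$ in Case 1, or $k-1$ and $k-1$ in Case 2).

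\textbf{Truthfulness.} I would use the single-parameter threshold characterization: fix all bids but $j$'s, and show that (a) $j$'s probability of trading is monotone in their own bid (weakly increasing for buyers, weakly decreasing for sellers), and (b) the price $j$ faces, conditional on trading, does not depend on $j$'s own bid. In Case 1 the price $s_{k+1}$ is insensitive to any particular cheap seller's bid (as long as $j$ stays cheap) and does not involve any buyer bid at all; in Case 2 the price $b_k$ depends only on buyer bids (so cheap sellers cannot perturb it) and a top-$(k-1)$ buyer's own bid does not appear in $b_k$. Once (a) and (b) are in hand, the threshold argument closes the proof. The principal obstacle is ruling out deviations that shift the value of $k$ or flip the mechanism between Case 1 and Case 2: for each such deviation I would argue, as McAfee did, that the deviator either loses the trade (getting $0$, which is no better than their non-negative truthful surplus) or trades at a new price $p'$ with $p' \ge b_j$ for a buyer, or $p' \le s_j$ for a seller, yielding non-positive utility. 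The randomization in Case 2 only forces us to work in expectation; since the lottery over cheap sellers is uniform and independent of any single seller's bid conditional on cheap-seller membership, the threshold argument still goes through.

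\textbf{Approximation ratio.} In Case 1 all $k$ efficient pairs trade at a single price, so the \mgft{} equals the optimum $\mathrm{OPT} := \sum_{i=1}^{k}(b_i - s_i)$. In Case 2 the expected \mgft{} is
\begin{align*}
\sum_{i=1}^{k-1}(b_i - b_k) + \frac{k-1}{k}\sum_{j=1}^{k}(b_k - s_j) \;=\; \sum_{i=1}^{k-1} b_i - \frac{k-1}{k}\sum_{j=1}^{k} s_j,
\end{align*}
and subtracting $(1 - 1/k)\,\mathrm{OPT} = \tfrac{k-1}{k}\bigl(\sum_{i=1}^{k} b_i - \sum_{i=1}^{k} s_i\bigr)$ leaves $\tfrac{1}{k}\sum_{i=1}^{k-1}(b_i - b_k)$, which is non-negative because $b_i \ge b_k$ for $i \le k-1$. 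Hence the expected \mgft{} is at least $(1 - 1/k)\,\mathrm{OPT}$ in both cases.
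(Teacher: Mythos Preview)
Your proposal is correct and follows essentially the same approach as the paper: the quick checks (PF, IR, SBB) are handled directly, truthfulness is argued via the single-parameter monotonicity-plus-threshold (critical-price) characterization, and the approximation ratio is verified case by case. Your treatment of the approximation ratio is in fact more careful than the paper's: the paper simply asserts that in Case~2 ``a single random deal is canceled, which implies an expected loss of $1/k$,'' which is loose because buyer $b_k$ is \emph{always} the excluded buyer while only the excluded seller is random; your explicit computation shows directly that the expected \mgft{} exceeds $(1-1/k)\,\mathrm{OPT}$ by the nonnegative quantity $\tfrac{1}{k}\sum_{i=1}^{k-1}(b_i-b_k)$, which is the honest way to close the bound.
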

\begin{proof}
The mechanism is PF by construction. It is IR since the trade is always between buyers whose value is above $p$ and sellers whose value is below $p$. It is SBB since there is always a single price and all payments are between buyers and sellers.
To analyze the gain-from-trade, note that in case 1, all $k$ efficient deals are carried out and thus the maximum possible gain is achieved. In case 2, a single random deal is canceled, which implies an expected loss of $1/k$. Hence, in expectation, at most a fraction $1/k$ of the gain-from-trade is lost.

To prove truthfulness, we use a characterization of truthful single-parameter mechanisms from
\cite[chapter 9]{Nisan2007Introduction}. A mechanism is truthful iff:

(a) the probability of an agent to win, given the bids of other agents, is a weakly monotonically increasing function of the agent's bid, and:

(b) the price paid by a winning agent equals the \emph{critical price} - the lowest value this agent has to bid in order to win, given the other agents' bids.


We prove that SBBA is truthful for the \textbf{buyers}. \hspace{1cm} The winning probability of a buyer is either 0 or 1. Hence, it is sufficient to prove that a winning buyer never loses by raising the bid. Consider two cases of a winning buyer:
\begin{itemize}
\item The buyer is $b_i$ for $i<k$. 
\item The buyer is $b_k$ and $b_k\geq s_{k+1}$.
\end{itemize}
In both cases, if the bid is raised, $b_k$ remains above $s_{k+1}$ and the buyer's index may only decrease, so the buyer still wins.

The critical price when $b_k<s_{k+1}$ is $b_k$, since a trading buyer (one of the expensive $k-1$ buyers) exits the trade by bidding below $b_k$ and becoming the new $b_k$. 
The critical price when $b_k\geq s_{k+1}$ is $s_{k+1}$, since a trading buyer (one of the expensive $k$ buyers) exits the trade by becoming $b_k$ and bidding below $s_{k+1}$. In both cases, the price paid by the winning buyers is the critical price. 

Finally, we prove that SBBA is truthful for the \textbf{sellers}. \hspace{1cm} The sellers' ask-prices represent negative valuations, so monotonicity means that a seller's probability of participation should increase when the ask-price \emph{decreases}. The winning probability of a seller is 0 when the seller is $s_i$ for $i\geq k+1$. When the seller is $s_i$ for $i\leq k$, the winning probability is positive, and it does not depend on $s_i$ itself but only on the relation between $s_{k+1}$ and $b_k$:
\begin{itemize}
\item If $s_{k+1}\leq b_k$, the probability that $s_i$ (for $i\leq k$) wins and pays is $1$;
\item If $s_{k+1}> b_k$, the probability that $s_i$ (for $i\leq k$) wins and pays is $1-1/k$.
\end{itemize}
In each of these cases, decreasing the ask-price can only decrease the seller's index, so the winning probability remains the same.

The critical price when $s_{k+1}\leq b_k$ is $s_{k+1}$, since a trading seller (one of the $k$ cheap sellers) exits the trade by asking above $s_{k+1}$ and becoming the new $s_{k+1}$. This is indeed the price paid to a winning seller.

The critical price when $s_{k+1} > b_k$ is $b_k$, since a trading seller (one of the $k$ cheap sellers) exits the trade by asking above $b_k$, which decreases $k$ by 1 (the seller who increased his ask-price becomes $s_k$, but now the number of efficient deals is $k-1$, so $s_k$ is excluded from trade). Indeed, $b_k$ is the price paid to a winning seller. Note that in this case both the winning and the price are realized with probability $1-1/k$. 

\qed
\end{proof}

\subsection{Alternatives}
\label{sub:Economic-interpretation}
The price set by our mechanism, $\min(b_{k},s_{k+1})$, has an interesting economic interpretation: it is the highest price in a price-equilibrium (aka Walrasian equilibrium), i.e. the highest price in which the market can be cleared by balancing supply and demand. If the price is raised above $b_k$ then the demand becomes less than $k$ while the supply is still at least $k$; if the price is raised above $s_{k+1}$ then the supply becomes at least $k+1$ while the demand is still at most $k$; in both cases there is an excess supply.

It is possible to switch the role of buyers and sellers, splitting the cases by whether $b_{k+1}>s_k$. In this case, the price is $\max(s_{k},b_{k+1})$ which is the \emph{lowest} price-equilibrium. The alternative mechanism has the same properties of our original mechanism.

There are some other alternatives that come to mind, but are either not truthful or not efficient:
\begin{itemize}
	\item If in Case 2, instead of using a lottery we select the trading sellers deterministically (e.g. taking the $k-1$ cheaper sellers), the mechanism will not be truthful, since some agents who want to trade at the market price have an incentive to deviate from their true values in order to enter the trade.
	\item If we always set the price to $s_{k+1}$, in some cases this price might be higher than the valuations of all buyers so we might lose all gain-from-trade. 
	\item If we always set the price to $b_{k}$, in some cases this price might be higher than the valuations of all sellers, the market will be flooded by inefficient sellers, and the expected gain-from-trade will be low.
\end{itemize}

\section{Spatially Distributed Markets}
\label{sec:sbba-sdm}
A \emph{spatially distributed market} is a collection of several markets, each of which is located in a different geographic location. It is possible to transport goods from one market to another for a fixed, positive transit cost, which may be different for each ordered pair of markets. Babaioff et al \citet{Babaioff2009Mechanisms} extended McAfee's mechanism to handle such markets. Similarly to McAfee's mechanism, their mechanism has a budget surplus. Below we briefly present their mechanism and present a strongly budget-balanced variant of it.

\subsection{Create the market-flow graph}
Create a network-flow graph representing the market in the following way:
\begin{itemize}
\item Create a node for each market. Create a directed edge between each pair of markets, with infinite capacity and with cost equal to the transit cost between the two markets (which may be different in each direction).
\item Create an additional \emph{Agents node}, representing the buyers and sellers.
For each seller in market $i$, create an edge FROM the Agents node TO Market $i$, with unit capacity and cost equal to the seller's ask-price. This edge represents the seller producing an item and sending it to the market. For each buyer in market $i$, create an edge TO the Agents node FROM Market $i$, with unit capacity and cost equal to MINUS the buyer's bid. This edge represents the buyer bringing an item from the market.
\end{itemize}
The following illustration shows a graph representing two markets, with a transit cost of 4 in each direction. Each solid arc represents an infinite-capacity edge. Each dashed arc represents several unit-capacity edges with different costs. The numeric labels are the costs.

\psset{unit=1.5cm,arrowsize=7pt}
\def\market(#1,#2,#3,#4,#5){%
	\rput(#2,#3){%
		\ovalnode{M#1}{%
			\shortstack{Market #1\\Sellers: #4\\Buyers: #5}%
		}%
	}%
}
\def\Agents{
\rput(0,0){\ovalnode[linecolor=black]{Agents}{\color{black}Agents}}
}
\def\sellers(#1,#2){
	\ncarc[npos=0.7,linecolor=blue,linestyle=dashed]{->}{Agents}{M#1}\naput{\color{blue}\scriptsize #2}
}
\def\buyers(#1,#2){
	\ncarc[npos=0.3,linecolor=red,linestyle=dashed]{->}{M#1}{Agents}\naput{\color{red}\scriptsize #2}
}
\def\transit(#1,#2,#3) {
	\ncarc[linewidth=2pt,linecolor=green,arrowsize=10pt]{->}{#1}{#2}\naput{\color{green}#3}
}

\def\exampleA{  
\market(1,-4,4,1 5 9 13 17,20 16 12 8 4)
\sellers(1,1 5 9 13 17)
\buyers(1,-20 -16 -12 -8 -4)
\market(2,4,4,15 19 22 27 31,36 32 28 23 18)
\sellers(2,15 19 22 27 31)
\buyers(2,-36 -32 -28 -23 -18)
\transit(M1,M2,4)
\transit(M2,M1,4)
}
\def\exampleB{  
\market(1,-4,4,1 5 9 13 19,20 18 12 8 4)
\sellers(1,1 5 9 13 19)
\buyers(1,-20 -18 -12 -8 -4)
\market(2,4,4,2 19 21 27 31,36 32 28 23 18)
\sellers(2,2 19 21 27 31)
\buyers(2,-36 -32 -28 -23 -18)
\transit(M1,M2,4)
\transit(M2,M1,4)
}
\begin{center}
\scalebox{0.7}{
\begin{pspicture}(-6,-1)(6,4)
\Agents
\exampleB
\end{pspicture}
}
\end{center}

\subsection{Calculate a minimum-cost flow}
Babaioff et al use a known polynomial-time algorithm for finding a flow with minimum cost in the market graph. Assuming all data is integral, the flow in every edge is also an integer number. In particular, the flow in every buyer/seller edge (with capacity 1) is either 0 or 1. Hence, a flow in the graph defines an allocation in which each trader trades if and only if the flow in the corresponding edge to/from the Agents node is 1. A minimum-cost flow corresponds to an optimal trade: the (negative) cost of the flow is minus the gain-from-trade.  The following illustration shows the minimum-cost flow and the corresponding optimal trade in the above example market (non-trading agents are bracketed):
\def\exampleAA{  
\market(1,-4,4,1 5 9 13 [17],20 16 [12 8 4])
\sellers(1,1 5 9 13)
\buyers(1,-20 -16)
\market(2,4,4,15 19 [22 27 31],36 32 28 23 [18])
\sellers(2,15 19)
\buyers(2,-36 -32 -28 -23)
\transit(M1,M2,4+4)
}
\def\exampleBB{  
\market(1,-4,4,1 5 9 13 [19],20 18 [12 8 4])
\sellers(1,1 5 9 13)
\buyers(1,-20 -18)
\market(2,4,4,2 19 [21 27 31],36 32 28 23 [18])
\sellers(2,2 19)
\buyers(2,-36 -32 -28 -23)
\transit(M1,M2,4+4)
}
\begin{center}
\scalebox{0.7}{
\begin{pspicture}(-6,-1)(6,4)
\Agents
\exampleBB
\end{pspicture}
}
\end{center}
There are 6 efficient deals. The net cost is -100 so the gain-from-trade is 100.

\subsection{Find commercial-relationship components}
In the optimal flow, the markets can be partitioned to groups, such that all trade is within groups and no trade is between groups. Such groups are called \emph{commercial-relationship components}; they are the connectivity components of a graph in which the nodes are the markets and there is an edge between markets trading in the optimal flow.

The optimal trade can be attained in a price-equilibrium, in which there is a single price in each market. In each component, the prices in the different markets are tied by the equilibrium conditions: if the price in market $i$ is $p_i$, and there is positive trade from market $i$ to market $j$, then we must have:
\begin{align}\label{eq:eq}
p_j = p_i+Cost[i,j]
\end{align}
\newcommand{\deltaij}{\ensuremath{\Delta_{i,j}}}
since in equilibrium, the sellers in market $i$ should be indifferent between selling in their local market for a net revenue of $p_i$, and selling in market $j$ for a net revenue of $p_j-Cost[i,j]$. Therefore, in each component, setting the price in a single market uniquely determines the prices in other markets. Formally, for every two markets $i,j$ in the same component there is a constant \deltaij{} such that, in any price-equilibrium, $p_j = p_i + \deltaij$ (\deltaij{} can be calculated, for example, by calculating cheapest paths in the residual graph of the min-cost flow; see \cite{Babaioff2009Mechanisms}).

In our running example, there is a single component. Here, $\Delta_{1,2}=Cost[1,2]=4$. The optimal trade can be attained in a price-equilibrium in which the price in Market 1 is $p_1$ and the price in Market 2 is $p_2=p_1+4$. Any price-vector between $p_1=15,p_2=19$ and $p_1=17,p_2=21$ is an equilibrium price-vector.

\subsection{Trade reduction}
At this point, Babaioff et al calculate a \emph{reduced residual graph} of the min-cost flow, remove a single cycle representing the least efficient deal, and determine two prices in each market (buy price and sell price) based on distances in the reduced residual graph. This gives a truthful mechanism with a budget surplus.

Here our mechanism takes a different approach:
\begin{itemize}
\item In each commercial-relationship-component, virtually bring all traders to an arbitrary market in that component, e.g, Market $i$. Adjust their bids according to the price-equilibrium conditions: a bid $b$ in Market $j$ is translated to a bid $b-\deltaij$ in Market $i$.
\item Proceed as in the single-market situation (Section \ref{sec:sbba}): order the buyers decreasingly and the sellers increasingly and find $k$ - the total number of efficient deals in the component. Set the price in Market $i$ to $p_i := \min(b_k,s_{k+1})$. 
\item Determine the prices in the other markets of the same component according to the equilibrium conditions: for every market $j$, set $p_j:= p_i + \deltaij$. Below we prove that all these prices are non-negative.
\item If $b_k$ is the price-setter, then exclude $b_k$ and a random seller in the component from trading. Otherwise, allow all $k$ efficient traders in the component to trade in their market prices.
\item Ban any trade between different components.
\end{itemize}

In the above example, when all traders are brought to Market 1, we have the following valuations (here $\Delta_{1,2}=4$; the adjusted valuations of traders brought from Market 2 are displayed in slanted digits):
\begin{itemize}
\item Sellers: \emph{$-$2}, 01, 05, 09, 13, \emph{15, 17}, 19, \emph{23, 27}
\item Buyers: \emph{32, 28, 24}, 20, \emph{19}, 18, \emph{14}, 12, 08, 04
\end{itemize}
Here, the number of efficient deals $k=6$. We have $b_k=18$ and $s_{k+1}=17$, so the price-setter is $s_{k+1}$, who is originally a seller in Market 2 (where his valuation is 21). The price-vector is $p_1=17,p_2=21$. All 6 efficient deals are performed.  
\begin{arxiv}
Another example, where the price-setter is $b_k$, is shown in Appendix \ref{sec:example}.
\end{arxiv}

\subsection{Analysis}
Similarly to our single-market mechanism, the spatially-distributed-market mechanism is prior-free and strongly-budget-balanced, since there is a single price in each market. The reduction in trade is only at most a single deal \emph{in each component}. Hence, the expected gain-from-trade in each component is $(1-1/k)$ of the optimum in that component, where $k$ is the number of efficient deals in the component. When the components are large, this $k$ may be much larger than the number of efficient deals in each market alone. 

\noindent For truthfulness, we use the monotonicity characterization shown in Theorem \ref{thm:sbba}.

First, we have to prove that a winning trader never loses by increasing/decreasing the bid/ask price. Indeed, if a trader is winning, then the edge from the Agents node to the trader is active in the min-cost flow. When the bid/ask increases/decreases, the cost of that edge decreases. Hence, the cost of the flow decreases, and it remains the min-cost flow. Thus, the partition of the graph to commercial relationship components does not change. Within each component, increasing/decreasing the bid/ask price weakly decreases the index of the trader in the ordering, so a winning buyer/seller is still among the first $k$ buyers/sellers.

Next, we have to prove that each trader pays the critical price. For traders originally from Market $i$, the critical price is $p_i = \min(b_k,s_{k+1})$; this follows immediately from the proof of Theorem \ref{thm:sbba}. Consider now a winning buyer from Market $j\neq i$ in the same component as Market $i$. If our buyer bids $b$, then the bid is translated to Market $i$ as $b-\deltaij$. The buyer exits the trade when $b-\deltaij<p_i$. Hence, the critical price for our buyer is $p_i+\deltaij$, which is exactly the price $p_j$ paid by our buyer. Similar considerations are true for the sellers. 

Finally, as promised, we prove that the prices determined by our mechanism are non-negative. Let Market $j$ be a market with a smallest price in some component. Suppose by contradiction that $p_j<0$. Since we proved that the mechanism is truthful, it implies that there are no active sellers in Market $j$ (since a seller would prefer to lie than to sell in negative price). This means that there is no trade outgoing from Market $j$. Since there must be active sellers elsewhere in the component, there must be other markets in the component, and this means that there is trade incoming to Market $j$, say, from Market $i$. But this means that $p_i = p_j - Cost[i,j]$. Since we assume that transit costs are positive, this contradicts the minimality of $p_j$.
\qed
\section{Future Work}
1. Our strongly-budget-balanced mechanism is randomized. This is in contrast to the VCG and McAfee mechanisms, which are deterministic. This raises the following open question: is there a deterministic mechanism that is strongly-budget-balanced (in addition to being prior-free, truthful, individually-rational and approximately-optimal)? 

2. Besides the spatially-distributed markets described in this paper, there are many other variants of McAfee's mechanism. An interesting line of future work is to survey these variants and see if and how they can be made strongly-budget-balanced.

\paragraph{\textbf{Acknowledgments}.}
This research was funded in part by the following institutions: The Doctoral Fellowships of Excellence Program at Bar-Ilan University, the Mordechai and Monique Katz Graduate Fellowship Program, and the Israel Science Fund grant 1083/13. We are grateful to Moshe Babaioff for his advice on the spatially-distributed-market mechanism.

\begin{arxiv}
	\newpage
\appendix
\section{Another Example of a Spatially-Distributed Market}\label{sec:example}
Consider a spatially-distributed market represented by the following graph:
\begin{center}
	\scalebox{0.8}{
		\begin{pspicture}(-6,-1)(6,4)
		\Agents
		\exampleA
		\end{pspicture}
	}
\end{center}

Here is the optimal flow:
\begin{center}
	\scalebox{0.8}{
		\begin{pspicture}(-6,-1)(6,4)
		\Agents
		\exampleAA
		\end{pspicture}
	}
\end{center}
It can be attained in a price-equilibrium where the price-vector can be anywhere between $p_1=15,p_2=19$ and $p_1=16,p_2=20$ (with $p_2=p_1+4$).

When all traders are brought to Market 1, we get:
\begin{itemize}
	\item Sellers: 01, 05, 09, \emph{11}, 13, \emph{15}, 17, \emph{18, 23, 27}
	\item Buyers: \emph{32, 28, 24}, 20, \emph{19}, 16, \emph{14}, 12, 08, 04
\end{itemize}
The number of efficient deals $k=6$. $b_k=16$ and $s_{k+1}=17$, so the price-setter is $b_k$ (who is originally from Market 1). The price-vector is set to $p_1=16,p_2=20$. Buyer 16 from market 1 and one randomly-selected seller are excluded from trading. All in all, 5 out of 6 efficient deals are performed.
\end{arxiv}

\bibliographystyle{splncs03}

\end{document}